\newcommand{\case}[1]{\textit{Case $\mathit{#1}$.}\ }
 \author{R.R. Kamalian}
\title{On cyclically-interval edge colorings of trees}
\in \mathbb{N}$, let $\mathfrak{M}_t$ be the set of graphs for
\in\mathfrak{M}$ and
\in\mathfrak{M}_t.$}
\begin{document}

\maketitle

\section{Introduction}
We consider undirected, simple, finite, and connected graphs. For a
graph $G$ we denote by $V(G)$ and $E(G)$ the sets of its vertices
and edges, respectively. The set of edges of $G$ incident with a
vertex $x\in V(G)$ is denoted by $J_G(x)$. The set of vertices of
$G$ adjacent to a vertex $x\in V(G)$ is denoted by $I_G(x)$. For any
$x\in V(G)$, $d_{G}(x)$ denotes the degree of the vertex $x$ in $G$.
For a graph $G$, we denote by $\Delta (G)$ and $\chi ^{\prime }(G)$
the maximum degree of a vertex of $G$ and the chromatic index of $G$
\cite{Vizing2}, respectively. The distance in a graph $G$ between
its vertices $x\in V(G)$ and $y\in V(G)$ is denoted by
$\rho_G(x,y)$. For any vertex $x_0\in V(G)$ and an arbitrary subset
$V_0$ of the set $V(G)$, we define the distance $\rho_G(x_0,V_0)$ in
a graph $G$ between $x_0$ and $V_0$ as follows:
$$\rho_G(x_0,V_0)\equiv\min_{z\in V_0}\rho_G(x_0,z)$$

For any integer $n\geq 3$, we denote by $C_n$ a simple cycle with
$n$ vertices. The terms and concepts that we do not define can be
found in \cite{West1}.

For an arbitrary finite set $A$, we denote by $|A|$ the number of
elements of $A$. The set of positive integers is denoted by
$\mathbb{N}$. An arbitrary nonempty subset of consecutive integers
is called an interval. An interval with the minimum element $p$ and
the maximum element $q$ is denoted by $[p,q]$. An interval $D$ is
called a $h$-interval if $|D|=h$.

For any $t\in \mathbb{N}$ and arbitrary integers $i_{1},i_{2}$
satisfying the conditions $i_{1}\in[1,t]$, $i_{2}\in[1,t]$, we
define \cite{Shved1_11} the sets $intcyc_{1}((i_{1},i_{2}),t),$
$intcyc_{1}[(i_{1},i_{2}),t],$ $intcyc_{2}((i_{1},i_{2}),t),$
$intcyc_{2}[(i_{1},i_{2}),t]$ and the number $dif((i_{1},i_{2}),t)$
as follows:
$$intcyc_{1}[(i_{1},i_{2}),t]\equiv [\min
\{i_{1},i_{2}\},\max\{i_{1},i_{2}\}],$$
$$intcyc_{1}((i_{1},i_{2}),t)\equiv
intcyc_{1}[(i_{1},i_{2}),t]\backslash (\{i_{1}\}\cup \{i_{2}\}),$$
$$intcyc_{2}((i_{1},i_{2}),t)\equiv [1,t]\backslash
intcyc_{1}[(i_{1},i_{2}),t],$$
$$intcyc_{2}[(i_{1},i_{2}),t]\equiv [1,t]\backslash
intcyc_{1}((i_{1},i_{2}),t),$$
$$dif((i_{1},i_{2}),t)\equiv \min\{\left\vert
intcyc_{1}[(i_{1},i_{2}),t]\right\vert ,\left\vert
intcyc_{2}[(i_{1},i_{2}),t]\right\vert\}-1.$$

If $t\in \mathbb{N}$ and $Q$ is a non-empty subset of the set
$\mathbb{N}$, then $Q$ is called a $t$-cyclic interval if there
exist integers $i_{1},i_{2},j_{0}$ satisfying the conditions
$i_{1}\in[1,t]$, $i_{2}\in[1,t]$, $ j_{0}\in \{1,2\}$,
$Q=intcyc_{j_{0}}[(i_{1},i_{2}),t]$.

A function $\varphi:E(G)\rightarrow [1,t]$ is called a proper edge
$t$-coloring of a graph $G$ if adjacent edges are colored
differently and each of $t$ colors is used.

If $\varphi$ is a proper edge $t$-coloring of a graph $G$ and
$E_{0}\subseteq E(G)$, then $\varphi [E_{0}]\equiv \{\varphi
(e)/e\in E_{0}\}$.

A proper edge $t$-coloring $\varphi $ of a graph $G$ is called an
interval $t$-coloring of $G$ \cite{Diss3, Oranj4, Canada} if for any
$x\in V(G)$, the set $\varphi[J_G(x)]$ is a $d_{G}(x)$-interval. For
any $t\in \mathbb{N}$, we denote by $\mathfrak{N}_t$ the set of
graphs for which there exists an interval $t$-coloring. Let us also
define the set $\mathfrak{N}$ of all interval colorable graphs:
$$\mathfrak{N}\equiv \bigcup_{t\geq1}\mathfrak{N}_t.$$
For any $G\in \mathfrak{N}$, we denote by $w_{int}(G)$ and
$W_{int}(G)$ the minimum and the maximum possible value of $t$,
respectively, for which $G\in \mathfrak{N}_t$. For a graph $G$, let
us set $\theta(G)\equiv\{t\in \mathbb{N}/G\in\mathfrak{N}_t\}$.

The problem of deciding whether a regular graph $G$ belongs to the
set $\mathfrak{N}$ is $NP$-complete \cite{Oranj4, Canada, Diss3}.
Nevertheless, for graphs $G$ of some classes the relation
$G\in\mathfrak{N}$ was proved and investigations of the set
$\theta(G)$ were fulfilled \cite{Oranj4, Canada, Diss3, Preprint5,
Petros_CSIT, Petros_DMath}. The concept of interval colorability of
a graph represents an especially high interest for a bipartite
graph, because in this case it can be used for mathematical
modelling of timetable problems with compactness requirements (i.e.
the lectures of each teacher and each group must be scheduled at
consecutive periods) \cite{Asratian_Diss, Cambridge, Diss3,
Petros_Akob}. Unfortunately, for an arbitrary bipartite graph $G$
the problem keeps the complexity of a general case \cite{Sev, Giaro,
Asratian_Cass2}. Some positive results were obtained for
\textquotedblleft small\textquotedblright
 bipartite graphs \cite{Giaro_Diss, Giaro_Kubale, Kubale}, for bipartite graphs with the
\textquotedblleft small\textquotedblright maximum degree of a vertex
\cite{Hansen_Dip, Giaro, Petros_Vest}, and for biregular bipartite
graphs \cite{Asratian_Cass1, Asratian_Cass2, Asratian_Cass3,
Asratian_Cass4, Asratian_Cass5, Cass, Hansen_Dip, Jensen_Toft,
Kamalian_Mir, Hans_Lot, Yang, Pyatkin}. Very interesting approaches
for biregular bipartite graphs were developed in \cite{Pyatkin,
Asratian_Cass5, Cass}. The examples of interval non-colorable
bipartite graphs were given in \cite{Cambridge, Giaro_Kubale,
Jensen_Toft, Sev}.

\begin{rem}
It is not difficult to see that for any integer $k\geq 2$,
$C_{2k}\in\mathfrak{N}$ and $\theta(C_{2k})=[2,k+1]$.
\end{rem}

A proper edge $t$-coloring $\varphi $ of a graph $G$ is called a
cyclically-interval $t$-coloring of $G$ if for any $x\in V(G)$, the
set $\varphi [J_G(x)]$ is a $t$-cyclic interval. For any $t\in
\mathbb{N}$, we denote by $\mathfrak{M}_t$ the set of graphs for
which there exists a cyclically-interval $t$-coloring. Let us also
define the set $\mathfrak{M}$ of all cyclically-interval colorable
graphs:
$$\mathfrak{M}\equiv\bigcup_{t\geq1}\mathfrak{M}_t.$$
For any $G\in \mathfrak{M}$, we denote by $w_{cyc}(G)$ and
$W_{cyc}(G)$ the minimum and the maximum possible value of $t$,
respectively, for which $G\in \mathfrak{M}_t.$ For a graph $G$, let
us set $\Theta(G)\equiv\{t\in \mathbb{N}/G\in\mathfrak{M}_t\}$.

\begin{rem}
The concept of cyclically-interval colorability of a graph
generalizes that of interval colorability. Clearly, for an arbitrary
graph $G\in\mathfrak{N}$, and for any $t\in\theta(G)$, an arbitrary
interval $t$-coloring of the graph $G$ is also a cyclically-interval
$t$-coloring of $G$, therefore, for any $t\in \mathbb{N}$,
$\mathfrak{N}_t\subseteq \mathfrak{M}_t$.
$\mathfrak{N}_2=\mathfrak{M}_2$. For any integer $t\geq3$,
$\mathfrak{N}_t\subset \mathfrak{M}_t$ (it is enough to consider the
simple cycle $C_t$). $\mathfrak{N}\subset\mathfrak{M}$ (it is enough
to consider the simple cycle $C_3$). For an arbitrary graph $G$,
$\theta(G)\subseteq\Theta(G)$.
\end{rem}

\begin{rem}
For any $G\in\mathfrak{N}$, the following inequality is true:
$$\Delta (G)\leq \chi ^{\prime }(G)\leq w_{cyc}(G)\leq
w_{int}(G)\leq W_{int}(G)\leq W_{cyc}(G)\leq \left\vert
E(G)\right\vert.$$
\end{rem}

\begin{rem}
It is not difficult to note that there exist examples $G_1$ and
$G_2$ of graphs from $\mathfrak{N}$ for which
$w_{cyc}(G_1)<w_{int}(G_1), W_{int}(G_2)<W_{cyc}(G_2)$. Let us set
$G_1=K_{3,2}$ and $G_2=K_{2,2}$. In this case, evidently,
$w_{cyc}(G_1)=3, w_{int}(G_1)=4$ \cite{Preprint5}, $W_{int}(G_2)=3$
\cite{Preprint5}, $W_{cyc}(G_2)=4$.
\end{rem}

The problem of cyclically-interval colorability of a graph
completely investigated as yet only for simple cycles \cite{Csit10,
Kamalian} and trees \cite{Shved1_11}. Some interesting results on
this and related topics were obtained in \cite{DeWerra6, DeWerra7,
Barth8, Daus9}.

For a tree $H$ with $V(H)=\{b_{1},...,b_{p}\},$ $p\geq 1$, we denote
by $P(b_{i},b_{j})$ the simple path connecting the vertices $b_{i}$
and $b_{j}$, $1\leq i\leq p$, $1\leq j\leq p$. The sets of vertices
and edges of the path $P(b_{i},b_{j})$ are denoted by
$VP(b_{i},b_{j})$ and $EP(b_{i},b_{j})$, respectively, $1\leq i\leq
p$, $1\leq j\leq p$.

Let us also define:
$$intVP(b_{i},b_{j})\equiv VP(b_{i},b_{j})\backslash(\{b_{i}\}\cup
\{b_{j}\});$$
$$\tilde{V}P(b_{i},b_{j})\equiv VP(b_{i},b_{j})\cup
\Bigg(\bigcup\limits_{x\in intVP(b_{i},b_{j})}I_H(x)\Bigg);$$
$$TP(b_{i},b_{j})\equiv \left\{
\begin{array}{lc}
\bigcup\limits_{x\in intVP(b_{i},b_{j})}J_H(x), & \textrm{if
$intVP(b_{i},b_{j})\neq\varnothing$} \\
EP(b_{i},b_{j}), & \textrm{if $intVP(b_{i},b_{j})=\varnothing$;}
\end{array}
\right.$$
$$1\leq i\leq p, 1\leq j\leq p.$$

Assume:
$$ M(H)\equiv\max\Big\{\big| TP(b_{i},b_{j})\big|/1\leq i\leq p, 1\leq j\leq p\Big\}.$$

In \cite{Preprint5} the following result was obtained.

\begin{thm} \cite{Preprint5}\label{Theorem1}
Let $H$ be an arbitrary tree.Then
\begin{enumerate}
\item $H\in \mathfrak{N}$,
\item $w_{int}(H)=\Delta (H),$
\item $W_{int}(H)=M(H),$
\item $\theta(H)=[\Delta(H),M(H)]$.
\end{enumerate}
\end{thm}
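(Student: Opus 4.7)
My plan is to establish the four assertions in order, reusing a single BFS-based coloring construction throughout.

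For (1) and (2), I root $H$ at an arbitrary vertex $r$ and color the edges of $J_H(r)$ with $1,2,\dots,d_H(r)$. Processing the remaining vertices in order of increasing distance from $r$, for a non-root vertex $v$ whose parent edge already carries color $c$, the inequality $d_H(v)\leq\Delta(H)$ guarantees an interval $I\subseteq[1,\Delta(H)]$ of size $d_H(v)$ with $c\in I$; I assign the remaining colors of $I$ to the edges from $v$ to its children. The resulting proper edge coloring is an interval $\Delta(H)$-coloring, so $H\in\mathfrak{N}$ and $w_{int}(H)\leq\Delta(H)$; the reverse bound is $w_{int}(H)\geq\chi'(H)\geq\Delta(H)$.

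For the upper bound in (3), given an interval $t$-coloring $\varphi$ of $H$, I pick edges $e',e''\in E(H)$ with $\varphi(e')=1$ and $\varphi(e'')=t$, and let $b_i,b_j$ be the endpoints of $e',e''$ farther from the other edge, so that both $e'$ and $e''$ lie in $TP(b_i,b_j)$. Each interior vertex $x$ of $P(b_i,b_j)$ contributes the interval $\varphi[J_H(x)]$ of size $d_H(x)$, and consecutive interior vertices share the path edge joining them; hence $\varphi[TP(b_i,b_j)]$, being a chain of pairwise overlapping intervals, is itself an interval in $[1,t]$. Since it contains both $1$ and $t$, it equals $[1,t]$, so $t\leq|TP(b_i,b_j)|\leq M(H)$. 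For the matching lower bound $W_{int}(H)\geq M(H)$, I take a pair $(b_i^*,b_j^*)$ realizing $|TP|=M(H)$, with consecutive interior vertices $x_1,\dots,x_k$, and color $TP(b_i^*,b_j^*)$ greedily: $\varphi[J_H(x_1)]=[1,d_H(x_1)]$ with the path edge $x_1x_2$ receiving the top color of this interval, and inductively $\varphi[J_H(x_\ell)]=[s_\ell,s_\ell+d_H(x_\ell)-1]$, where $s_\ell$ is the color of $x_{\ell-1}x_\ell$ and the path edge $x_\ell x_{\ell+1}$ is given the top color. This places exactly $\sum_\ell d_H(x_\ell)-(k-1)=M(H)$ distinct colors on $TP(b_i^*,b_j^*)$; extending by the BFS rule of (1) within the palette $[1,M(H)]$ finishes the construction, since every uncolored vertex has degree $\leq\Delta(H)\leq M(H)$.

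For (4), the containment $\theta(H)\subseteq[\Delta(H),M(H)]$ is immediate from (2) and (3). For $t\in[\Delta(H),M(H)]$, I reuse the path $P(b_i^*,b_j^*)$ with interior $x_1,\dots,x_k$, but instead of the rigid greedy choice of intervals I pick arbitrary intervals $I_\ell$ satisfying $|I_\ell|=d_H(x_\ell)$, $I_\ell\cap I_{\ell+1}\neq\varnothing$, and $\bigcup_\ell I_\ell=[1,t]$. Such $I_\ell$'s exist: the ``fully overlapping'' configuration $I_\ell=[1,d_H(x_\ell)]$ gives a union of size $\max_\ell d_H(x_\ell)\leq\Delta(H)$, the ``fully spread'' configuration of the previous paragraph gives a union of size $M(H)$, and sliding intervals one step at a time alters $|\bigcup_\ell I_\ell|$ by at most one per step, so every integer in $[\max_\ell d_H(x_\ell),M(H)]\supseteq[\Delta(H),M(H)]$ is realized. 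With the $I_\ell$'s fixed, I colour each path edge $x_\ell x_{\ell+1}$ with a color in $I_\ell\cap I_{\ell+1}$, distribute the remaining colors of $I_\ell$ among the other edges of $J_H(x_\ell)$, and extend to the subtrees hanging off the path by the rule of (1) within $[1,t]$. The main obstacle of the proof is precisely this interpolation: one must verify rigorously that the unit-shift operations on the intervals $I_\ell$ can realize every integer value of $|\bigcup_\ell I_\ell|$ in $[\max_\ell d_H(x_\ell),M(H)]$ while preserving both $|I_\ell|=d_H(x_\ell)$ and the overlap condition $I_\ell\cap I_{\ell+1}\neq\varnothing$.
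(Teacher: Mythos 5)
First, note that the paper does not prove this theorem at all: it is quoted from \cite{Preprint5}, so there is no in-paper argument to compare with, and your proof has to stand on its own. Parts (1)--(3) of your proposal do stand: the rooted BFS construction gives an interval $\Delta(H)$-coloring, the ``$1$ and $t$ both lie in $\varphi[TP(b_i,b_j)]$, which is a union of consecutively overlapping intervals'' argument gives $W_{int}(H)\leq M(H)$, and the staircase coloring of a path realizing $M(H)$, extended by the BFS rule (using $\Delta(H)\leq M(H)$), gives the matching lower bound. This is all correct.

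The genuine gap is in (4), and it is twofold. First, the interpolation step you yourself flag is not a formality: the claim is not just that unit shifts change $\big|\bigcup_\ell I_\ell\big|$ by at most one, but that there is a \emph{schedule} of unit shifts joining the fully overlapping configuration to the staircase through configurations that all satisfy $I_\ell\cap I_{\ell+1}\neq\varnothing$. Naive schedules fail: with degrees $(5,5,2)$ you cannot slide $I_2$ toward its staircase position while $I_3=[1,2]$ stays put, nor slide $I_3$ all the way up past $I_2$. The cleanest repair is to skip the sliding altogether and write the configuration for each $t\in[\Delta(H),M(H)]$ explicitly, e.g. $I_\ell=[a_\ell,a_\ell+d_H(x_\ell)-1]$ with $a_\ell=\min\{s_\ell,\,t-d_H(x_\ell)+1\}$, where $s_\ell$ are the staircase offsets; then the overlap condition and $\bigcup_\ell I_\ell=[1,t]$ are short direct checks. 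Second, there is an unacknowledged issue in turning the intervals into a coloring: each interior vertex $x_\ell$ is incident with \emph{two} path edges, and your rule ``give $x_\ell x_{\ell+1}$ any color of $I_\ell\cap I_{\ell+1}$'' can force both path edges at $x_\ell$ to receive the same color, since the stated constraints allow $I_{\ell-1}\cap I_\ell$ and $I_\ell\cap I_{\ell+1}$ to be the same singleton (e.g. $I_{\ell-1}=I_{\ell+1}=[3,5]$, $I_\ell=[5,7]$). So the choice of the configuration and of the path-edge colors must be coordinated (the capped-staircase configuration above admits such a choice), and this needs to be argued, not assumed.
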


\begin{cor}\label{Cor1}
For any tree $H$, $H\in \mathfrak{M}$, $w_{cyc}(H)=\Delta(H)$,
$W_{cyc}(H)\geq M(H)$, $[\Delta(H),M(H)]\subseteq\Theta(H)$.
\end{cor}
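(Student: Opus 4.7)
The plan is to derive each of the four assertions directly from Theorem~\ref{Theorem1} combined with the inclusions recorded in the remarks; no new combinatorial construction should be needed. The corollary is essentially a bookkeeping exercise.

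First, I would observe that Theorem~\ref{Theorem1}(1) gives $H\in\mathfrak{N}$, and Remark~2 records the inclusion $\mathfrak{N}\subset\mathfrak{M}$, so $H\in\mathfrak{M}$ is immediate. Next, for $w_{cyc}(H)=\Delta(H)$, I would chain the inequalities of Remark~3, namely $\Delta(H)\le w_{cyc}(H)\le w_{int}(H)$, against Theorem~\ref{Theorem1}(2), which identifies $w_{int}(H)=\Delta(H)$. The chain then collapses, forcing equality throughout and in particular $w_{cyc}(H)=\Delta(H)$.

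For the lower bound $W_{cyc}(H)\ge M(H)$, I would again invoke Remark~3, which gives $W_{int}(H)\le W_{cyc}(H)$, and then replace $W_{int}(H)$ by $M(H)$ using Theorem~\ref{Theorem1}(3). Finally, to show $[\Delta(H),M(H)]\subseteq\Theta(H)$, I would combine Theorem~\ref{Theorem1}(4), which says $\theta(H)=[\Delta(H),M(H)]$, with the general containment $\theta(G)\subseteq\Theta(G)$ stated at the end of Remark~2 (itself a consequence of $\mathfrak{N}_t\subseteq\mathfrak{M}_t$ for every $t$, since an interval $t$-coloring is in particular a cyclically-interval $t$-coloring).

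Since every step is a direct quotation of a previously established inclusion or equality, there is no genuine obstacle; the only care required is to cite the correct item of Theorem~\ref{Theorem1} (or remark) at each stage and to present the four conclusions in the order listed in the statement. I would not attempt to sharpen $W_{cyc}(H)\ge M(H)$ here, as the corollary explicitly asks only for the lower bound and examples such as $K_{2,2}$ in Remark~4 already indicate that strict inequality is possible in general.
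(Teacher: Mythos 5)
Your derivation is correct and matches exactly how the paper intends Corollary~\ref{Cor1} to be read: it is a direct consequence of Theorem~\ref{Theorem1} combined with the inclusion $\mathfrak{N}\subset\mathfrak{M}$, the chain of inequalities in Remark~3, and the containment $\theta(H)\subseteq\Theta(H)$, which is why the paper gives no separate proof. Nothing is missing; each of the four claims is justified by the appropriate item, as you have done.
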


In this paper, for any tree $H$, we show that $W_{cyc}(H)=M(H)$ and
$\Theta(H)=[\Delta(H), M(H)]$.

\section{Results}
\begin{lem}\label{Lemma1}
If $Q_{1},...,Q_{n}$ ($n\geq 2$) are $t$-cyclic intervals, and for
any $j\in[1,n-1]$, $Q_{j}\cap Q_{j+1}\neq \varnothing,$ then
$\bigcup\limits_{i=1}^{n}Q_{i}$ is a $t$-cyclic interval.
\end{lem}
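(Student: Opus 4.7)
The plan is to proceed by induction on $n$, with essentially all the content lying in the base case $n = 2$: namely, that the union of two intersecting $t$-cyclic intervals is itself a $t$-cyclic interval.

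For the base case, first dispose of the trivial possibility $Q_1 \cup Q_2 = [1,t]$, in which case the union equals $intcyc_2[(1,1),t]$ and is a $t$-cyclic interval by definition. Otherwise, pick any $v \in [1,t]\setminus (Q_1 \cup Q_2)$ and relabel the cyclic order on $[1,t]$ by $k \mapsto ((k - v - 1) \bmod t) + 1$, so that $v$ becomes $t$. Every $t$-cyclic interval is by construction a consecutive arc in the cyclic order on $[1,t]$, so any $t$-cyclic interval avoiding $v$ becomes, after relabeling, an ordinary (linear) interval inside $[1, t-1]$. Thus $Q_1$ and $Q_2$ correspond to two intervals $[a_1,b_1], [a_2,b_2] \subseteq [1,t-1]$ with $[a_1,b_1] \cap [a_2,b_2] \neq \varnothing$. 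Since the union of two intersecting intervals in a linear order is an interval, the relabeled $Q_1 \cup Q_2$ is an interval, and pulling back, $Q_1 \cup Q_2$ is a $t$-cyclic interval on $[1,t]$.

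The inductive step is then routine. Assuming the result for $n-1$, the set $Q' := Q_{n-1} \cup Q_n$ is a $t$-cyclic interval by the base case, and $Q' \cap Q_{n-2} \supseteq Q_{n-1} \cap Q_{n-2} \neq \varnothing$. Therefore the sequence $Q_1,\ldots,Q_{n-2},Q'$ satisfies the hypothesis with $n-1$ terms, and the inductive hypothesis gives that its union, which equals $\bigcup_{i=1}^{n}Q_i$, is a $t$-cyclic interval.

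The only delicate point is to dispatch the base case uniformly: two $t$-cyclic intervals can each ``wrap around'' the boundary between $t$ and $1$, and they may meet in several cyclic components, so a direct case analysis on the various endpoint configurations would be tedious. The device of selecting an external vertex $v \notin Q_1 \cup Q_2$ and ``cutting'' the cycle at $v$ collapses all wrap-around patterns into the familiar linear situation, leaving only the degenerate case $Q_1 \cup Q_2 = [1,t]$, which is immediate. This is where I expect the main work of the proof to sit.
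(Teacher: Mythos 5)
Your proof is correct and follows the same route the paper indicates: the paper dismisses Lemma \ref{Lemma1} with ``proof can be easily accomplished by induction on $n$,'' and your induction on $n$ matches this, with the base case handled soundly by cutting the cycle at a point $v\notin Q_1\cup Q_2$ (after disposing of $Q_1\cup Q_2=[1,t]$) so that both arcs become ordinary intersecting intervals. In effect you have simply supplied the details the paper omits.
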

\textit{Proof} can be easily accomplished by induction on $n$.

\begin{lem}\label{Lemma2}
Let $\alpha$ be a cyclically-interval $t$-coloring of a graph $G$,
and $P_{0}=(x_{0},e_{1},x_{1},...,x_{k-1},e_{k},x_{k})$ be a simple
path connecting a vertex $x_{0}\in V(G)$ with a vertex $x_{k}\in
V(G)$, $k\geq 2$. Then
$\alpha\Bigg[\bigcup\limits_{i=1}^{k-1}J_G(x_{i})\Bigg]$ is a
$t$-cyclic interval.
\end{lem}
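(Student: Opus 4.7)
The plan is to apply Lemma~\ref{Lemma1} directly to the sets $Q_i \equiv \alpha[J_G(x_i)]$ for $i \in [1,k-1]$. First I would observe that since $\alpha$ is a cyclically-interval $t$-coloring of $G$, each internal vertex $x_i$ of the path $P_0$ satisfies the property that $\alpha[J_G(x_i)]$ is a $t$-cyclic interval; this is immediate from the definition of cyclically-interval $t$-coloring. Hence each $Q_i$ is a $t$-cyclic interval.

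Next I would dispose of the trivial case $k=2$: here the union reduces to the single set $Q_1 = \alpha[J_G(x_1)]$, which is a $t$-cyclic interval by the observation above, so there is nothing more to show.

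For $k \geq 3$, the key point is that consecutive $Q_j$ and $Q_{j+1}$ always intersect. Indeed, for any $j \in [1,k-2]$, the edge $e_{j+1}$ of the path is incident with both $x_j$ and $x_{j+1}$, so $\alpha(e_{j+1}) \in J_G(x_j) \cap J_G(x_{j+1})$, whence $\alpha(e_{j+1}) \in Q_j \cap Q_{j+1}$; in particular $Q_j \cap Q_{j+1} \neq \varnothing$. Thus the sequence $Q_1,\ldots,Q_{k-1}$ satisfies exactly the hypotheses of Lemma~\ref{Lemma1}, and applying that lemma yields that $\bigcup_{i=1}^{k-1} Q_i = \alpha\bigl[\bigcup_{i=1}^{k-1} J_G(x_i)\bigr]$ is a $t$-cyclic interval, as required.

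There is really no major obstacle: the argument is a one-line reduction to Lemma~\ref{Lemma1}, and the only subtlety is noticing that adjacency of $x_j$ and $x_{j+1}$ along the path forces $\alpha(e_{j+1})$ into both $Q_j$ and $Q_{j+1}$. The separation of the $k=2$ case is a matter of bookkeeping rather than mathematics, since Lemma~\ref{Lemma1} is stated for $n \geq 2$ sets.
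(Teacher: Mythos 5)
Your proof is correct and follows essentially the same route as the paper: handle $k=2$ directly, note each $\alpha[J_G(x_i)]$ is a $t$-cyclic interval by definition, observe consecutive sets share the color of the connecting path edge, and invoke Lemma~\ref{Lemma1}. (Only a cosmetic slip: you should write $e_{j+1}\in J_G(x_j)\cap J_G(x_{j+1})$, hence $\alpha(e_{j+1})\in Q_j\cap Q_{j+1}$, rather than placing $\alpha(e_{j+1})$ in the edge sets themselves.)
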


\begin{proof}
If $k=2$, then the statement follows from the definition of the
cyclically-interval $t$-coloring. Now assume that $k\geq 3$. It is
clear that the sets $\alpha [J_G(x_{1})],...,\alpha[J_G(x_{k-1})]$
are $t$-cyclic intervals with
$$\alpha [J_G(x_{j})]\cap \alpha [J_G(x_{j+1})]\neq
\varnothing \textrm{ for any } j\in[1,k-2].$$

Lemma \ref{Lemma1} implies that
$\alpha\Bigg[\bigcup\limits_{i=1}^{k-1}J_G(x_{i})\Bigg]$ is a
$t$-cyclic interval.
\end{proof}

\begin{lem}\label{Lemma3}
Let $\alpha $ be a cyclically-interval $t$-coloring of a graph $G$,
and $P_{0}=(x_{0},e_{1},x_{1},...,x_{k-1},e_{k},x_{k})$ be a simple
path connecting a vertex $x_{0}\in V(G)$ with a vertex $x_{k}\in
V(G)$, $k\geq 2$. Then at least one of the following statements is
true:
\begin{enumerate}
\item $intcyc_{1}((\alpha (e_{1}),\alpha (e_{k})),t)\subseteq \alpha\Bigg[\bigcup\limits_{i=1}^{k-1}J_G(x_{i})\Bigg],$
\item $intcyc_{2}((\alpha (e_{1}),\alpha (e_{k})),t)\subseteq \alpha\Bigg[\bigcup\limits_{i=1}^{k-1}J_G(x_{i})\Bigg]$.
\end{enumerate}
\end{lem}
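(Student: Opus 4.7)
The plan is to reduce the statement to a short combinatorial claim about cyclic intervals on $[1,t]$ and then verify it by cases. I would set $Q \equiv \alpha\Big[\bigcup_{i=1}^{k-1} J_G(x_i)\Big]$; by Lemma \ref{Lemma2}, $Q$ is a $t$-cyclic interval. Since $e_1 \in J_G(x_1)$ and $e_k \in J_G(x_{k-1})$, both colors $\alpha(e_1)$ and $\alpha(e_k)$ lie in $Q$. Writing $a = \alpha(e_1)$ and $b = \alpha(e_k)$, the lemma reduces to the following abstract claim: \emph{whenever $Q$ is a $t$-cyclic interval and $a,b \in Q$, at least one of $intcyc_1((a,b),t)$, $intcyc_2((a,b),t)$ is contained in $Q$.}

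To establish this claim, I would first dispose of the degenerate case $a = b$, in which $intcyc_1((a,b),t) = \varnothing$ and statement 1 holds automatically. Otherwise, I expand $Q$ using its definition as a $t$-cyclic interval: either $Q = intcyc_1[(i_1,i_2),t] = [\min\{i_1,i_2\},\max\{i_1,i_2\}]$, or $Q = intcyc_2[(i_1,i_2),t]$, which after assuming without loss of generality that $i_1 \leq i_2$ has the form $[1,i_1] \cup [i_2,t]$. In the first form, the constraint $a,b \in Q$ forces the endpoints of $Q$ to sandwich the pair $\{a,b\}$, whence $intcyc_1((a,b),t) \subseteq Q$. In the second form I would split further: if $a$ and $b$ lie in the same component of $Q$, then the short linear interval strictly between them sits inside that component, giving $intcyc_1((a,b),t) \subseteq Q$; if they lie in different components, a direct comparison of endpoints yields $intcyc_2((a,b),t) \subseteq Q$.

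The only nontrivial aspect is the bookkeeping in this case analysis. Conceptually, however, the lemma is transparent once one visualises $[1,t]$ as a cycle: $t$-cyclic intervals are arcs, the two points $a,b$ cut the cycle into the two complementary open arcs $intcyc_1((a,b),t)$ and $intcyc_2((a,b),t)$, and any arc containing both $a$ and $b$ must absorb at least one of these two arcs. The rest is just algebraic verification from the definitions of $intcyc_1$ and $intcyc_2$ given earlier in the paper.
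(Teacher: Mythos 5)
Your proof is correct and follows essentially the same route as the paper: both apply Lemma \ref{Lemma2} to conclude that $\alpha\big[\bigcup_{i=1}^{k-1}J_G(x_i)\big]$ is a $t$-cyclic interval containing $\alpha(e_1)$ and $\alpha(e_k)$, and then rest on the elementary fact that such a cyclic interval must contain one of the two open arcs determined by these two colors. The only difference is presentational: the paper argues by contradiction (picking $\tau_1,\tau_2$ omitted from the two arcs and declaring the resulting configuration ``not hard to see'' to be impossible, after reducing to $dif((\alpha(e_1),\alpha(e_k)),t)\geq 2$), whereas you verify the same fact directly by a case analysis on the two possible forms of a $t$-cyclic interval, which if anything makes the step the paper leaves implicit fully explicit.
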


\begin{proof}
Without loss of generality we may assume that $dif((\alpha
(e_{1}),\alpha (e_{k})),t)\geq 2$.

Let us assume that none of the statements 1) and 2) is true. Then
there are $\tau _{1}$, $\tau _{2}$ such that
$$\tau _{1}\in intcyc_{1}((\alpha (e_{1}),\alpha (e_{k})),t),
\tau_{1}\not\in
\alpha\Bigg[\bigcup\limits_{i=1}^{k-1}J_G(x_{i})\Bigg],$$

$$\tau _{2}\in intcyc_{2}((\alpha (e_{1}),\alpha (e_{k})),t), \tau
_{2}\not\in
\alpha\Bigg[\bigcup\limits_{i=1}^{k-1}J_G(x_{i})\Bigg],$$ therefore
$\left\{ \tau _{1},\tau _{2}\right\} \cap
\alpha\Bigg[\bigcup\limits_{i=1}^{k-1}J_G(x_{i})\Bigg]=\varnothing
$.

Lemma \ref{Lemma2} implies that
$\alpha\Bigg[\bigcup\limits_{i=1}^{k-1}J_G(x_{i})\Bigg]$ is a
$t$-cyclic interval with
$$\left\{ \alpha (e_{1}),\alpha (e_{k})\right\}
\subseteq \alpha\Bigg[\bigcup\limits_{i=1}^{k-1}J_G(x_{i})\Bigg].$$

It is not hard to see that the relations
$$\left\{ \alpha(e_{1}),\alpha (e_{k})\right\} \subseteq \alpha\Bigg[\bigcup\limits_{i=1}^{k-1}J_G(x_{i})\Bigg]\textrm{ and }\left\{ \tau
_{1},\tau _{2}\right\} \cap
\alpha\Bigg[\bigcup\limits_{i=1}^{k-1}J_G(x_{i})\Bigg]=\varnothing
$$ are incompatible.
\end{proof}

\begin{lem}\label{Lemma4}
If $\alpha $ is a cyclically-interval $t$-coloring of a tree $H$,
$t\in\Theta(H)$, $V(H)=\{b_{1},...,b_{p}\}$, $p\geq 1$, then there
are vertices $b'\in V(H),$ $b''\in V(H)$ such that $[1,t]=\alpha
[TP(b',b'')].$
\end{lem}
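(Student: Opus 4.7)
\textit{Plan.} The plan is to combine a maximality argument with a contradiction. If $|V(H)|\leq 2$, the only non-trivial case is $|V(H)|=2$, $t=1$, and then $TP(b_1,b_2)=\{e_1\}$ gives $\alpha[TP(b_1,b_2)]=[1,1]$ directly. So I assume $|V(H)|\geq 3$, so that $H$ has at least two leaves, and note that extending a path endpoint which is not a leaf to a leaf only enlarges $\alpha[TP]$, so the maximum of $|\alpha[TP(\cdot,\cdot)]|$ over pairs of vertices is attained on a pair of leaves.

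Among all pairs of leaves, pick $(b',b'')$ maximizing $N:=|\alpha[TP(b',b'')]|$, and suppose for contradiction $N<t$. Writing $P(b',b'')=(x_0,e_1,x_1,\ldots,e_k,x_k)$, we have $k\geq 2$; by Lemma~\ref{Lemma2} the set $I:=\alpha[TP(b',b'')]$ is a $t$-cyclic interval, and every $e_i\in EP(b',b'')$ is incident to some internal vertex $x_\ell$, so $\alpha[EP(b',b'')]\subseteq I$. Picking a missing color $c\in[1,t]\setminus I$ and an edge $e^*$ with $\alpha(e^*)=c$, we see $e^*\notin EP(b',b'')$; and since $b',b''$ are leaves, $e^*$ must lie in some subtree $T_j$ hanging off an internal vertex $x_j$ (with $1\leq j\leq k-1$) and not be incident to $x_j$ (else $c\in\alpha[J_H(x_j)]\subseteq I$). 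Let $b^*$ be a leaf of $T_j$ chosen so that $e^*\in EP(x_j,b^*)$, and write $P(x_j,b^*)=(x_j,y_1,\ldots,y_m=b^*)$ with $m\geq 2$.

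Set $A:=\alpha[\bigcup_{i=j}^{k-1}J_H(x_i)]$, $B:=\alpha[\bigcup_{i=1}^{j}J_H(x_i)]$, $C:=\alpha[\bigcup_{i=1}^{m-1}J_H(y_i)]$. Applying Lemma~\ref{Lemma2} to the subpaths $P(x_{j-1},b'')$, $P(b',x_{j+1})$ and $P(x_j,b^*)$ shows that $A,B,C$ are $t$-cyclic intervals, and one checks $A\cup B=I$, $B\cup C=\alpha[TP(b',b^*)]$, $A\cup C=\alpha[TP(b^*,b'')]$. All three intervals contain the color $p$ of the edge $x_jy_1$, and $c\in C\setminus(A\cup B)$. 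By the maximality of $N$, $|A\cup C|\leq N$ and $|B\cup C|\leq N$.

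The main obstacle is to extract a numerical contradiction from these three size bounds. I would parametrize each $t$-cyclic interval by its clockwise and counterclockwise extensions from the common point $p$ on the cyclic color line $\{1,\ldots,t\}$: say $A$ extends $a_+$ clockwise and $a_-$ counterclockwise, and similarly $(b_+,b_-),(c_+,c_-)$. The condition $c\in C\setminus(A\cup B)$ forces either $c_+>\max(a_+,b_+)$ or $c_->\max(a_-,b_-)$; by symmetry suppose the former. Then $|A\cup C|=\max(a_-,c_-)+c_++1\leq\max(a_-,b_-)+\max(a_+,b_+)+1=N$ forces $\max(a_-,c_-)<\max(a_-,b_-)$, whence $b_-=\max(a_-,b_-)>c_-$. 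Substituting into $|B\cup C|=b_-+c_++1\leq N=b_-+\max(a_+,b_+)+1$ gives $c_+\leq\max(a_+,b_+)$, contradicting the case assumption. Hence $N=t$, proving the lemma; the only substantive step is this cyclic-interval arithmetic.
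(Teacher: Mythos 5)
Your proof is correct, and its core mechanism genuinely differs from the paper's. Both arguments are extremal proofs by contradiction with the same skeleton: fix a path whose $TP$ carries the maximum number $N$ of colors, pick a color missing from it, locate an edge of that color in a subtree hanging off an internal vertex $x_j$, and contradict maximality. The paper, however, works with the vertex $z_0$ of $\tilde{V}P_0$ nearest to that edge and must treat the cases $z_0=x_0$, $z_0=x_k$ and three subcases according to the color $\tau'$ of the attaching edge, each time invoking Lemma \ref{Lemma3} to force a whole cyclic interval into $\alpha[TP(z_0,x'')]$. You instead restrict the maximization to pairs of leaves, which eliminates the endpoint cases outright (no edge outside $TP(b',b'')$ can touch $x_0$ or $x_k$), and you replace Lemma \ref{Lemma3} entirely by direct arithmetic on the three cyclic intervals $A,B,C$ supplied by Lemma \ref{Lemma2}, anchored at the common color of the edge $(x_j,y_1)$; the only case split left is the clockwise/counterclockwise symmetry. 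This yields a shorter, more self-contained argument, at the price of doing the arc bookkeeping by hand. One detail you should make explicit: identities such as $|A\cup C|=\max(a_-,c_-)+c_++1$ and $|B\cup C|=b_-+c_++1$ presuppose that the union does not wrap around the full cycle; if it does, that union equals $[1,t]$, so its size is $t>N$ and maximality is contradicted at once, and the anchor identity $N=\max(a_-,b_-)+\max(a_+,b_+)+1$ is exact precisely because $|A\cup B|=N<t$. With that sentence added (and perhaps a symbol other than $p$ for the anchor color, since $p$ already denotes $|V(H)|$), the proof is complete.
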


\begin{proof}
Assume the contrary. Suppose that for an arbitrary $b_{i}\in V(H),$
$b_{j}\in V(H),$ $\alpha [TP(b_{i},b_{j})]\subset[1,t]$. Set:
$\max\Big\{\big| \alpha[TP(b_{i},b_{j})]\big|/1\leq i\leq p, 1\leq
j\leq p\Big\}\equiv m_0.$ It is clear that $m_{0}<t$. Without loss
of generality we may assume that $m_{0}\geq 2$. Consider the simple
path $P_{0}=(x_{0},e_{1},x_{1},...,x_{k-1},e_{k},x_{k})$ of the tree
$H$ with $\big| \alpha [TP_{0}]\big| =m_{0}$. Clearly, without loss
of generality, we may assume that $k\geq 2$.

Lemma \ref{Lemma2} implies that there are $i^{\prime }\in[1,t]$,
$i^{\prime \prime}\in[1,t]$, and $j^{\prime }\in \{1,2\}$, for which
$\alpha\Bigg[\bigcup\limits_{i=1}^{k-1}J_H(x_{i})\Bigg]=intcyc_{j^{\prime
}}[(i^{\prime },i^{\prime \prime }),t]$. As $m_{0}<t$, there is
$\tau _{0}\in [1,t]$ such that $\tau _{0}\not\in intcyc_{j^{\prime
}}[(i^{\prime },i^{\prime \prime }),t]$.

Consider an edge $e^{1}\in E(H)$ for which $\alpha(e^{1})=\tau
_{0}$, and assume that $e^{1}=(u_{0},u_{1})$. Clearly, $e^{1}\not\in
TP_{0}(x_{0},x_{k})$.

Without loss of generality we may assume that
$\rho_{H}(u_{1},\tilde{V}P_{0}(x_{0},x_{k}))<\\
\rho_{H}(u_{0},\tilde{V}P_{0}(x_{0},x_{k}))$. Let $z_{0}\in
\tilde{V}P_{0}(x_{0},x_{k})$ be the vertex with $\rho
_{H}(u_{1},z_{0})=\rho _{H}(u_{1},\tilde{V}P_{0}(x_{0},x_{k}))$. It
is not hard to see that
$z_{0}\in\tilde{V}P_{0}(x_{0},x_{k})\backslash
intVP_{0}(x_{0},x_{k})$ and for any $z^{\prime }\in
\tilde{V}P_{0}(x_{0},x_{k})\backslash intVP_{0}(x_{0},x_{k}),$
$z^{\prime }\neq z_{0}$, $\rho _{H}(u_{1},z_{0})<\rho
_{H}(u_{1},z^{\prime })$.

\case{1} $z_{0}=x_{0}$. Clearly, $\big| \alpha
[TP(u_{0},x_{k})]\big| \geq m_{0}+1$, which contradicts the choice
of $P_{0}$.

\case{2} $z_{0}=x_{k}$. This case is considered similarly as the
case 1.

\case{3} $z_{0}\neq x_{0}$, $z_{0}\neq x_{k}$.

Clearly, there is $\tilde{x}\in intVP_{0}(x_{0},x_{k})$ such that
$z_{0}\in I_H(\tilde{x})$. Suppose that $\alpha
((z_{0},\tilde{x}))=\tau ^{\prime } $. Clearly, $i^{\prime }\neq
i^{\prime \prime }$.

\case{3a} $\tau ^{\prime }=i^{\prime }$.

Lemma \ref{Lemma3}, the equalities $\alpha (e^{1})=\tau _{0}$,
$\alpha ((z_{0},\tilde{x}))=i^{\prime }$, and the definition of the
path $P(u_{0},\tilde{x})$ imply that $\exists j_{1}\in \{1,2\}$ such
that $intcyc_{j_{1}}[(\tau_{0},i^{\prime }),t]\subseteq \alpha
\Bigg[\bigcup\limits_{x\in intVP(u_{0}, \tilde{x})}J_H(x)\Bigg]$.
Consider the edge $\tilde{e}\in TP_{0}(x_{0},x_{k})$ with $\alpha
(\tilde{e})=i^{\prime \prime }$. Assume: $\tilde{e}=(x^{\prime
},x^{\prime \prime })$. Without loss of generality we may assume
that $\rho _{H}(z_{0},x^{\prime })<\rho _{H}(z_{0},x^{\prime \prime
})$. It is not hard to check that $TP(z_{0},x^{\prime \prime
})\subseteq TP_{0}(x_{0},x_{k})$, therefore, by the choice of $\tau
_{0}$, we have $\tau _{0}\not\in \alpha [TP(z_{0},x^{\prime \prime
})]$. Lemma \ref{Lemma2} implies that $\alpha [TP(z_{0},x^{\prime
\prime })]$ is a $t$-cyclic interval.

Clearly, $\exists j_{2}\in \{1,2\}$ such that $\tau _{0}\in
intcyc_{j_{2}}((i^{\prime },i^{\prime \prime }),t)$, and, therefore,\\
$intcyc_{j_{2}}((i^{\prime },i^{\prime \prime }),t)\nsubseteq \alpha
[TP(z_{0},x^{\prime \prime })]$.

This conclusion, the equalities $\alpha((z_{0},\tilde{x}))=i^{\prime
}$, $\alpha (\tilde{e})=i^{\prime \prime }$, and Lemma \ref{Lemma3}
imply that $ intcyc_{3-j_{2}}[(i^{\prime },i^{\prime \prime
}),t]\subseteq \alpha [TP(z_{0},x^{\prime \prime })]$, hence $\big|
\alpha [TP(u_{0},x^{\prime \prime })]\big| \geq m_{0}+1$, which
contradicts the choice of $P_{0}$.

\case{3b} $\tau ^{\prime }=i^{\prime \prime }$. This case is
considered similarly as the case 3a with interchanging of the roles
of $i^{\prime}$ and $i^{\prime \prime }$.

\case{3c} $\tau ^{\prime }\not\in \{i^{\prime },i^{\prime
\prime}\}$.

Lemma \ref{Lemma3}, the equalities $\alpha (e^{1})=\tau _{0}$,
$\alpha ((z_{0},\tilde{x}))=\tau ^{\prime }$, and the definition of
the path $P(u_{0},\tilde{x})$ imply that $\exists j_{1}\in \{1,2\}$
such that $intcyc_{j_{1}}[(\tau _{0},\tau ^{\prime }),t]\subseteq
\alpha \Bigg[\bigcup\limits_{x\in
intVP(u_{0},\tilde{x})}J_H(x)\Bigg]$. This implies that at least one
of the following statements is true:
\begin{enumerate}
\item $i^{\prime }\in intcyc_{j_{1}}[(\tau _{0},\tau ^{\prime }),t],$
\item $i^{\prime \prime }\in intcyc_{j_{1}}[(\tau _{0},\tau ^{\prime }),t]$.
\end{enumerate}

Without loss of generality let us assume that the statement 1) is
true. Consider the edge $\tilde{e}\in TP_{0}(x_{0},x_{k})$ with
$\alpha (\tilde{e})=i^{\prime \prime }$. Assume:
$\tilde{e}=(x^{\prime },x^{\prime \prime })$. Without loss of
generality we may assume that $\rho _{H}(z_{0},x^{\prime })<\rho
_{H}(z_{0},x^{\prime \prime })$. It is not hard to check that
$TP(z_{0},x^{\prime \prime })\subseteq TP_{0}(x_{0},x_{k})$,
therefore, by the choice of $\tau _{0}$, we have $\tau _{0}\not\in
\alpha [TP(z_{0},x^{\prime \prime })]$. Lemma \ref{Lemma2} implies
that $\alpha [TP(z_{0},x^{\prime \prime })]$ is a $t$-cyclic
interval.

Clearly, $\exists j_{2}\in \{1,2\}$ such that $\tau _{0}\in
intcyc_{j_{2}}((\tau ^{\prime },i^{\prime \prime }),t)$, and,
therefore, \\ $intcyc_{j_{2}}((\tau ^{\prime },i^{\prime \prime
}),t)\nsubseteq \alpha [TP(z_{0},x^{\prime \prime })]$. This
conclusion, the equalities $\alpha ((z_{0},\tilde{x}))=\tau ^{\prime
}$, $\alpha (\tilde{e})=i^{\prime \prime }$, and Lemma \ref{Lemma3}
imply that $intcyc_{3-j_{2}}[(\tau ^{\prime },i^{\prime \prime
}),t]\subseteq \alpha [TP(z_{0},x^{\prime \prime })]$, hence $\big|
\alpha [TP(u_{0},x^{\prime \prime })]\big| \geq m_{0}+1$, which
contradicts the choice of $P_{0}$.
\end{proof}

\begin{cor}\label{Corollary1}
If $\alpha $ is a cyclically-interval $t$-coloring of a tree $H$,
where $t\in\Theta(H)$, then there are vertices $x^{\prime }\in
V(H)$, $x^{\prime \prime }\in V(H)$ such that $t\leq \left\vert
TP(x^{\prime },x^{\prime \prime })\right\vert $.
\end{cor}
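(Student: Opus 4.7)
The plan is to derive this corollary directly from Lemma \ref{Lemma4}, which is the heavy-lifting result: given any cyclically-interval $t$-coloring $\alpha$ of the tree $H$, there exist vertices $b',b''\in V(H)$ such that every color in $[1,t]$ appears on some edge of $TP(b',b'')$, i.e.\ $\alpha[TP(b',b'')]=[1,t]$.

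From here I would simply compare cardinalities. Since $\alpha$ is a function from $E(H)$ to $[1,t]$, for any $E_0\subseteq E(H)$ we trivially have $|\alpha[E_0]|\leq |E_0|$. Applying this with $E_0 = TP(b',b'')$ and using the equality supplied by Lemma \ref{Lemma4} gives
$$t = \bigl|[1,t]\bigr| = \bigl|\alpha[TP(b',b'')]\bigr| \leq \bigl|TP(b',b'')\bigr|.$$
Choosing $x' = b'$ and $x'' = b''$ yields the required vertices.

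There is no real obstacle here, since the substantive work (showing that some $TP(b',b'')$ uses all $t$ colors under $\alpha$) was already accomplished in Lemma \ref{Lemma4}. The corollary is essentially a restatement of that lemma in quantitative form, converting the set equality $\alpha[TP(b',b'')]=[1,t]$ into the cardinality inequality $t\leq |TP(b',b'')|$, which is exactly the bound that will later be used (together with the definition of $M(H)$) to conclude $W_{cyc}(H)\leq M(H)$ and hence $W_{cyc}(H)=M(H)$, $\Theta(H)=[\Delta(H),M(H)]$.
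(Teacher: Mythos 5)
Your proof is correct and follows exactly the paper's own argument: apply Lemma \ref{Lemma4} to obtain $b',b''$ with $\alpha[TP(b',b'')]=[1,t]$, then use the trivial bound $\bigl|\alpha[TP(b',b'')]\bigr|\leq\bigl|TP(b',b'')\bigr|$ to conclude $t\leq\bigl|TP(b',b'')\bigr|$. Nothing is missing; this is the same route the paper takes.
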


\begin{proof}
Since the inequality $\big| \alpha [TP(x,y)]\big| \leq \left\vert
TP(x,y)\right\vert $ holds for arbitrary vertices $x\in V(H)$, $y\in
V(H),$ it is not difficult to notice that our statement follows from
Lemma \ref{Lemma4}.
\end{proof}

\begin{cor}
If $\alpha $ is a cyclically-interval $W_{cyc}(H)$-coloring of a
tree $H$, then there are vertices $x^{\prime }\in V(H)$,
$x^{\prime\prime}\in V(H)$ such that $W_{cyc}(H)\leq\left\vert
TP(x^{\prime},x^{\prime\prime})\right\vert$.
\end{cor}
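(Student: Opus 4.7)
The plan is to observe that this corollary is an immediate specialization of Corollary \ref{Corollary1}. By the definition of $W_{cyc}(H)$ as the maximum value of $t$ for which $H\in\mathfrak{M}_t$, the existence of a cyclically-interval $W_{cyc}(H)$-coloring $\alpha$ of $H$ forces $W_{cyc}(H)\in\Theta(H)$. Thus the hypothesis of Corollary \ref{Corollary1}, namely that $\alpha$ is a cyclically-interval $t$-coloring of $H$ with $t\in\Theta(H)$, is satisfied by taking $t=W_{cyc}(H)$.

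Applying Corollary \ref{Corollary1} to this choice of $t$ yields vertices $x'\in V(H)$ and $x''\in V(H)$ with $W_{cyc}(H)\leq \bigl|TP(x',x'')\bigr|$, which is precisely the conclusion. There is no real obstacle here; the entire content of the argument has already been packaged into Corollary \ref{Corollary1} (itself resting on Lemma \ref{Lemma4}, whose inductive push using Lemmas \ref{Lemma1}--\ref{Lemma3} was the genuinely nontrivial step). The present corollary is just the instantiation $t=W_{cyc}(H)$, singled out because it is the version one actually wants for the main theorem $W_{cyc}(H)=M(H)$: combined with the bound $\bigl|TP(x',x'')\bigr|\leq M(H)$ coming from the definition of $M(H)$, it will give $W_{cyc}(H)\leq M(H)$, matching the reverse inequality $W_{cyc}(H)\geq M(H)$ already recorded in Corollary \ref{Cor1}.
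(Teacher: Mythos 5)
Your proof is correct and matches the paper's (implicit) argument exactly: the paper states this corollary without a separate proof precisely because it is the instantiation $t=W_{cyc}(H)$ of Corollary \ref{Corollary1}, justified by $W_{cyc}(H)\in\Theta(H)$. Nothing further is needed.
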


\begin{cor}\label{Cor4}
For any tree $H$, $W_{cyc}(H)\leq M(H)$.
\end{cor}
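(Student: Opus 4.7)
The plan is to derive this as a one-line consequence of the preceding corollary together with the definition of $M(H)$, so there is essentially no new work to do at this stage.

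First, by definition of $W_{cyc}(H)$, the tree $H$ admits a cyclically-interval $W_{cyc}(H)$-coloring; fix such a coloring $\alpha$. Applying the preceding corollary (the one stated just before this one), there exist vertices $x', x'' \in V(H)$ such that $W_{cyc}(H) \leq |TP(x', x'')|$.

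Next, by the very definition
$$M(H) \equiv \max\Big\{\big|TP(b_i, b_j)\big| : 1 \leq i \leq p,\ 1 \leq j \leq p\Big\},$$
we have $|TP(x', x'')| \leq M(H)$. Chaining these two inequalities gives $W_{cyc}(H) \leq M(H)$, as required.

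There is no real obstacle here: all the substantive combinatorial content has been absorbed into Lemma \ref{Lemma4} (that some path-tree $TP(b', b'')$ already receives every color in $[1,t]$ under any cyclically-interval $t$-coloring) and its immediate passage to Corollary \ref{Corollary1} (dropping $\alpha$ and comparing cardinalities). The only thing the present corollary contributes is specializing $t$ to $W_{cyc}(H)$ and invoking the maximum defining $M(H)$. Thus the proof will be just a two-sentence citation of the previous corollary followed by the definition of $M(H)$.
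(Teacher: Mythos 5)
Your proof is correct and follows exactly the route the paper intends: the paper states this corollary without a separate proof precisely because it is the preceding corollary (existence of $x',x''$ with $W_{cyc}(H)\leq\left\vert TP(x',x'')\right\vert$ for a cyclically-interval $W_{cyc}(H)$-coloring, which exists since $H\in\mathfrak{M}$ by Corollary \ref{Cor1}) combined with the definition of $M(H)$ as the maximum of $\left\vert TP(b_i,b_j)\right\vert$. Nothing is missing.
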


\begin{thm}
For any tree $H$, $W_{cyc}(H)=M(H)$.
\end{thm}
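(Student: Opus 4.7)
The plan is essentially to observe that the theorem is now an immediate consequence of the two corollaries already established. Corollary~\ref{Cor1} gives the lower bound $W_{cyc}(H)\geq M(H)$, since it asserts that $[\Delta(H),M(H)]\subseteq\Theta(H)$, and in particular $M(H)\in\Theta(H)$, forcing $W_{cyc}(H)\geq M(H)$. Corollary~\ref{Cor4} gives the matching upper bound $W_{cyc}(H)\leq M(H)$. So the entire content of the proof is to combine these two inequalities.

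The real work has already been done in Lemma~\ref{Lemma4}: given any cyclically-interval $t$-coloring $\alpha$ of $H$, one exhibits vertices $b',b''\in V(H)$ with $[1,t]=\alpha[TP(b',b'')]$. From this one deduces $t=|[1,t]|=|\alpha[TP(b',b'')]|\leq|TP(b',b'')|\leq M(H)$, which is exactly the chain producing Corollary~\ref{Cor4}. So I would simply cite Corollaries~\ref{Cor1} and~\ref{Cor4} and conclude.

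In outline: first I would recall that by Corollary~\ref{Cor1} the interval $[\Delta(H),M(H)]$ lies in $\Theta(H)$, hence $W_{cyc}(H)\geq M(H)$. Then I would invoke Corollary~\ref{Cor4} for the reverse inequality $W_{cyc}(H)\leq M(H)$. Combining yields $W_{cyc}(H)=M(H)$.

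There is no obstacle left; the substantive difficulty lay in Lemma~\ref{Lemma4}, where one had to argue by contradiction that a cyclically-interval $t$-coloring must ``concentrate'' all $t$ colors on some $TP(b',b'')$. Once that is in hand, the theorem is a one-line synthesis of the two corollaries.
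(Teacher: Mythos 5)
Your proof is correct and follows exactly the paper's route: the theorem is obtained by combining the lower bound $W_{cyc}(H)\geq M(H)$ from Corollary~\ref{Cor1} with the upper bound $W_{cyc}(H)\leq M(H)$ from Corollary~\ref{Cor4}, the latter resting on Lemma~\ref{Lemma4}. Nothing is missing.
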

\textit{Proof} follows from Corollaries \ref{Cor1} and \ref{Cor4}.


\begin{cor}\cite{Shved1_11}
Let $H$ be an arbitrary tree. Then
\begin{enumerate}
\item $H\in \mathfrak{M}$,
\item $w_{cyc}(H)=\Delta (H),$
\item $W_{cyc}(H)=M(H),$
\item $\Theta(H)=[\Delta(H),M(H)]$.
\end{enumerate}
\end{cor}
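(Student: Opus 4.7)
My plan is to assemble the four items from the results already accumulated in the paper, noting that the real content has been developed in Lemma~\ref{Lemma4}, its corollaries, and the preceding theorem. Items (1) and (2) require no new work: Corollary~\ref{Cor1} states $H\in\mathfrak{M}$ and $w_{cyc}(H)=\Delta(H)$ outright, so I would simply cite it. Item (3) is exactly the theorem just established, obtained by combining Corollary~\ref{Cor1} (which gives $W_{cyc}(H)\ge M(H)$) with Corollary~\ref{Cor4} (which gives $W_{cyc}(H)\le M(H)$), so again just a citation.

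The only place where a small additional argument is needed is item (4), and even there the work is essentially bookkeeping. For the inclusion $[\Delta(H),M(H)]\subseteq \Theta(H)$ I would cite Corollary~\ref{Cor1} directly. For the reverse inclusion $\Theta(H)\subseteq[\Delta(H),M(H)]$, let $t\in\Theta(H)$ and pick any cyclically-interval $t$-coloring $\alpha$ of $H$. The lower bound $t\geq \Delta(H)$ is immediate because $\alpha$ is in particular a proper edge coloring, so every vertex $x$ of maximum degree requires $\Delta(H)$ distinct colors on its incident edges, forcing $t\geq \Delta(H)$. The upper bound $t\leq M(H)$ follows from Corollary~\ref{Corollary1}: it furnishes vertices $x',x''\in V(H)$ with $t\leq |TP(x',x'')|$, and the definition of $M(H)$ as the maximum of $|TP(b_i,b_j)|$ over all pairs gives $|TP(x',x'')|\leq M(H)$. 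Combining the two bounds yields $t\in[\Delta(H),M(H)]$.

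The two inclusions together give item (4). I do not anticipate any real obstacle: all the substantive combinatorial work has already been done in Lemma~\ref{Lemma4} (pushing a color outside the current path's image leads to a longer path with a larger color-image, contradicting maximality) and in the verification of the previously stated theorem. The corollary is purely a packaging step, and the proof can be stated in a few lines as "items (1)--(3) follow from Corollary~\ref{Cor1} and the preceding theorem, and item (4) follows from Corollary~\ref{Cor1} together with Corollary~\ref{Corollary1} and the trivial bound $t\geq \Delta(H)$ for any proper edge $t$-coloring."
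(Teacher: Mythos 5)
Your proposal is correct and matches the paper's (implicit) argument: the corollary is presented there without proof precisely because items (1)--(3) are restatements of Corollary~\ref{Cor1} and the preceding theorem, and item (4) follows from $[\Delta(H),M(H)]\subseteq\Theta(H)$ (Corollary~\ref{Cor1}) together with the bounds $w_{cyc}(H)=\Delta(H)$ and $W_{cyc}(H)=M(H)$. Your slightly more explicit derivation of the reverse inclusion via Corollary~\ref{Corollary1} and the trivial bound $t\geq\Delta(H)$ is the same content, just spelled out.
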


\begin{cor}
For an arbitrary tree $H$ and any positive integer $t$, $H\in
\mathfrak{M}_t$ if and only if $H\in \mathfrak{N}_t.$
\end{cor}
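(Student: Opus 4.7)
The plan is to reduce the biconditional to the equality of the two index sets $\theta(H)$ and $\Theta(H)$ for trees, both of which have already been pinned down in the excerpt.

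First I would unwind the definitions: by definition, $H\in\mathfrak{N}_t$ is the same as $t\in\theta(H)$, and $H\in\mathfrak{M}_t$ is the same as $t\in\Theta(H)$. So the statement is equivalent to $\theta(H)=\Theta(H)$.

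Next, I would quote the two results that have already been established. Theorem~\ref{Theorem1} gives $\theta(H)=[\Delta(H),M(H)]$, and the preceding corollary of the present paper (combining $w_{cyc}(H)=\Delta(H)$ from Corollary~\ref{Cor1} with $W_{cyc}(H)=M(H)$ from the main theorem, together with the integer-interval character of $\Theta(H)$ via $[\Delta(H),M(H)]\subseteq\Theta(H)$ from Corollary~\ref{Cor1} and $\Theta(H)\subseteq[\Delta(H),M(H)]$ from Corollaries~\ref{Corollary1} and~\ref{Cor4}) gives $\Theta(H)=[\Delta(H),M(H)]$. Comparing the two expressions yields $\theta(H)=\Theta(H)$, which is precisely the desired biconditional.

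There is essentially no new obstacle: all the work has been done, since the nontrivial inclusion $\mathfrak{M}_t\subseteq\mathfrak{N}_t$ (for trees) is exactly what the chain of lemmas in Section~2 was designed to prove, namely that $W_{cyc}(H)\leq M(H)$, which rules out any ``extra'' value of $t$ realizable only by a genuinely cyclic coloring. The opposite inclusion $\mathfrak{N}_t\subseteq\mathfrak{M}_t$ is the general observation already recorded in Remark~2. Thus the proof reduces to citing the two structural descriptions and observing their coincidence.
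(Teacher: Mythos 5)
Your proposal is correct and matches the paper's (implicit) argument: the corollary is exactly the observation that $\theta(H)=[\Delta(H),M(H)]=\Theta(H)$, combining Theorem~\ref{Theorem1} with the preceding corollary, which is itself obtained from Corollaries~\ref{Cor1} and~\ref{Cor4} just as you describe. No gaps; your handling of the containment $\Theta(H)\subseteq[\Delta(H),M(H)]$ via the min/max values $w_{cyc}(H)=\Delta(H)$ and $W_{cyc}(H)\leq M(H)$ is precisely what makes the set equality, and hence the biconditional, go through.
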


\section{Acknowledgment}

The author thanks the anonymous reviewer for his useful advices and
suggestions.


\begin{thebibliography}{9}
\bibitem{Asratian_Diss} {\sc Asratian A.S.} {\it Investigation of some mathematical model of scheduling
                theory}. Doctoral Thesis, Moscow, 1980 (in Russian).

\bibitem{Asratian_Cass1} {\sc Asratian A.S., Casselgren C.J.}
                {\it A sufficient condition for interval edge co\-lo\-rings of $(4,3)$-biregular bipartite
                graphs}. Research report LiTH-MAT-R-2006-07, Link\"{o}ping University, 2006.

\bibitem{Asratian_Cass2} {\sc Asratian A.S., Casselgren C.J.}
                {\it Some results on interval edge colorings of $(\alpha,\beta)$-biregular bipartite
                graphs}. Research report LiTH-MAT-R-2006-09, Link\"{o}ping University, 2006.

\bibitem{Asratian_Cass3} {\sc Asratian A.S., Casselgren C.J.} {\it On interval edge colorings of
                $(\alpha,\beta)$-biregular bipartite graphs}. Discrete Math. 307(2006), 1951--1956.

\bibitem{Asratian_Cass4} {\sc Asratian A.S. and Casselgren C.J.} {\it On path factors of $(3,4)$-biregular
                bigraphs}. Graphs and Combinatorics 24(2008), 405--411.

\bibitem{Asratian_Cass5} {\sc Asratian A.S., Casselgren C.J., Vandenbussche J., West D.B.} {\it Proper
                path-factors and interval edge-coloring of $\left(3,4\right)$-biregular bigraphs}.
                J. Graph Theory 61(2009), 88--97.

\bibitem{Cambridge} {\sc Asratian A.S., Denley T.M.J., Haggkvist R.} {\it Bipartite graphs and their
                applications}. Cambridge Tracts in Mathematics, 131, Cambridge University Press, 1998.

\bibitem{Oranj4} {\sc  Asratian A.S., Kamalian R.R.}
                {\it Interval colorings of edges of a multigraph}. Appl. Math. {\bf 5}(1987),
               Yerevan State University, 25--34 (in Russian).

\bibitem{Canada} {\sc Asratian A.S., Kamalian R.R.} {\it Investigation on interval
                edge-colorings of graphs}. J. Combin. Theory Ser. B 62(1994), 34--43.

\bibitem{Barth8} {\sc  Bartholdi J.J., Orlin J.B., Ratliff H.D.} {\it Cyclic scheduling via integer programs with circular ones}.
                Operations Research {\bf 28}, 1980, 1074--1085.

\bibitem{Cass} {\sc  Casselgren C.J.} {\it A note on path factors of $(3,4)$-biregular bipartite graphs}. The Electron. J. of
               Combinatorics 18(2011), P218.

\bibitem{Daus9} {\sc Dauscha W., Modrow H.D., Neumann A.} {\it On cyclic sequence type for constructing cyclic schedules}.
                Zeitschrift f\"{u}r Operations Research {\bf 29}, 1985, 1--30.

\bibitem{Giaro} {\sc  Giaro K.} {\it The complexity of consecutive $\Delta $-coloring of bipartite graphs: $4$ is easy,
                $5$ is hard}. Ars Combin. 47(1997), 287--298.

\bibitem{Giaro_Diss} {\sc Giaro K.} {\it Compact task scheduling on dedicated processors with no
                waiting periods}. PhD thesis, Technical University of Gdansk, EIT faculty, Gdansk, 1999 (in Polish).

\bibitem{Giaro_Kubale} {\sc Giaro K., Kubale M. and Malafiejski M.} {\it On the deficiency of
                bipartite graphs}. Discrete Appl. Math. 94 (1999), 193--203.

\bibitem{Hansen_Dip} {\sc Hansen H.M.} {\it Scheduling with minimum waiting periods}. Master's Thesis, Odense University,
                Odense, Denmark, 1992 (in Danish).

\bibitem{Hans_Lot} {\sc Hanson D., Loten C.O.M., Toft B.} {\it On interval colorings of bi-regular bipartite graphs}.
                Ars Combin. 50(1998), 23--32.

\bibitem{Jensen_Toft} {\sc Jensen T.R., Toft B.} {\it Graph Coloring Problems}. Wiley Interscience Series in Discrete
                Mathematics and Optimization, 1995.

\bibitem{Preprint5} {\sc  Kamalian R.R.} {\it Interval colorings of complete bipartite graphs and trees}.
                Preprint of the Computing Centre of the Academy of Sciences of Armenia, Yerevan, 1989 (in Russian).

\bibitem{Diss3} {\sc  Kamalian R.R.} {\it Interval Edge Colorings of Graphs}.
                Doctoral dissertation. The Institute of Mathematics of the Siberian Branch of
                the Academy of Sciences of USSR, Novosibirsk, 1990 (in Russian).

\bibitem{Csit10} {\sc  Kamalian R.R.} {\it On cyclically continuous edge colorings of simple cycles}.
                Proceedings of the CSIT Conference, Yerevan, 2007, 79--80 (in Russian).

\bibitem{Shved1_11} {\sc  Kamalian R.R.} {\it On a number of colors in cyclically interval edge colorings of trees}.
                Research report LiTH-MAT-R-2010/09-SE, Link\"{o}ping University, 2010.

\bibitem{Kamalian} {\sc Kamalian R.R.} {\it On a number of colors in cyclically continuous edge colorings of simple cycles}.
                The Herald of the RAU, No1, Yerevan, 2010, pp. 13--21 (in Russian).

\bibitem{Kamalian_Mir} {\sc Kamalian R.R., Mirumian A.N.} {\it Interval edge colorings of bipartite graphs of some class}.
                Dokl. NAN RA, 97(1997), 3--5 (in Russian).

\bibitem{Kubale} {\sc Kubale M.} {\it Graph Colorings}. American Mathematical Society, 2004.

\bibitem{Petros_CSIT} {\sc Petrosyan P.A.} {\it Interval edge-colorings of M\"{o}bius ladders}. Proceedings of
                the CSIT Conference, Yerevan, 2005, 146--149 (in Russian).

\bibitem{Petros_DMath} {\sc Petrosyan P.A.} {\it Interval edge-colorings of complete graphs and
                $n$-dimensional cubes}. Discrete Math. 310(2010), 1580--1587.

\bibitem{Petros_Vest} {\sc Petrosyan P.A.} {\it On interval edge-colorings of
                multigraphs}. The Herald of the RAU, No1, Yerevan, 2011, pp. 12--21 (in Russian).

\bibitem{Petros_Akob} {\sc Petrosyan P.A., Arakelyan H.Z., Baghdasaryan V.M.} {\it A
                generalization of interval edge-colorings of graphs}. Discrete Appl. Math. 158(2010), 1827--1837.

\bibitem{Pyatkin} {\sc Pyatkin A.V.} {\it Interval coloring of $(3,4)$-biregular bipartite graphs having large
                cubic subgraphs}. J. Graph Theory 47(2004), 122--128.

\bibitem{Sev} {\sc Sevast'janov S.V.} {\it Interval colorability of the edges of a bipartite graph}.
                Metody Diskret. Analiza 50(1990), 61--72 (in Russian).

\bibitem{Vizing2}  {\sc  Vizing V.G.}
                {\it The chromatic index of a multigraph}. Kibernetika {\bf 3} (1965),
                29--39.

\bibitem{DeWerra7} {\sc  de~Werra D., Mahadev N.V.R., Solot Ph.} {\it Periodic compact scheduling}.
                ORWP 89/18, Ecole Polytechnique F\'{e}d\'{e}rale de Lausanne, 1989.

\bibitem{DeWerra6} {\sc  de~Werra D., Solot Ph.} {\it Compact cylindrical chromatic scheduling}.
                ORWP 89/10, Ecole Polytechnique F\'{e}d\'{e}rale de Lausanne, 1989.

\bibitem{West1}  {\sc West D.B.} {\it Introduction to Graph Theory}. Prentice-Hall, New Jersey, 1996.

\bibitem{Yang} {\sc Yang F., Li X.} {\it Interval coloring of $(3,4)$-biregular
                bigraphs having two $(2,3)$-biregular bipartite subgraphs}. Appl.
                Math. Letters 24(2011), 1574--1577.

\end{thebibliography}
\end{document}